\newtheorem{theorem}{Theorem}
\newtheorem{corollary}[theorem]{Corollary}
\newtheorem{definition}[theorem]{Definition}
\newtheorem{example}[theorem]{Example}
\newtheorem{lemma}[theorem]{Lemma}
\newtheorem{remark}[theorem]{Remark}
\newenvironment{proof}[1][Proof]{\noindent\textbf{#1.} }{\ }
\newenvironment{keywords}[1][Keywords]{\noindent\textbf{#1:} }{}
\def\Rbb{\mathbb{R}}
\def\Cbb{\mathbb{C}}
\def\half{\frac{1}{2}}
\def\quarter{\frac{1}{4}}
\def\diag{\hbox{diag}}
\def\isym{\mathfrak{i}}
\begin{document}

\title{On synthesis of linear quantum stochastic systems
by pure cascading\footnote{This research is supported by the
Australian Research Council.}}
\author{Hendra~I.~Nurdin 
\thanks{H.~I.~Nurdin is with the Department of Information Engineering,
Research School of Information Sciences and Engineering (Building
115), The Australian National University, Canberra ACT 0200,
Australia. Phone: +61-2-61258656 Fax: +61-2-61258660 Email:
Hendra.Nurdin@anu.edu.au.}}

\maketitle 
\begin{abstract}
Recently, it has been demonstrated that
an arbitrary linear quantum stochastic system can be realized as a
cascade connection of simple one degree of freedom quantum
harmonic oscillators together with a direct interaction
Hamiltonian which is bilinear in the canonical operators of the
oscillators. However, from an experimental point of view,
realizations by pure cascading, without a direct interaction
Hamiltonian, would be much simpler to implement and this raises
the natural question of what class of linear quantum stochastic
systems are realizable by cascading alone. This paper gives a
precise characterization of this class of linear quantum stochastic systems 
and then it is proved that, in the {\em
weaker} sense of transfer function realizability, {\em all} passive
linear quantum stochastic systems belong to this class. A
constructive example is given to show the transfer function
realization of a two degrees of freedom passive linear quantum
stochastic system by pure cascading.
\end{abstract}

\begin{keywords}
Linear quantum stochastic systems, quantum system realization,
quantum networks, quantum control, linear quantum optics
\end{keywords}

\section{Background and Motivation}
Recently, there has been interest in the literature on control of
a linear quantum stochastic system with a controller which is a
quantum system of the same type \cite{YK03b,JNP06,NJP07b,Mab08},
often referred to as ``coherent-feedback control''.
The potential applications for linear quantum stochastic systems include quantum information processing and photonic signal processing. For instance, they can act as the coherent photonic circuitry subsystem in a cavity  QED  system, the latter system being realized by placing suitable atoms inside the optical  cavities in a linear quantum stochastic system. Cavity QED networks are of interest for quantum information processing (see, e.g., \cite{NC00}), such as in the quantum internet \cite{Kimb08}, 
whilst the controller realized in \cite{Mab08} is an early sample application of linear quantum stochastic systems to photonic signal processing.  

The  studies on coherent-feedback control naturally led to the consideration of the network synthesis problem for linear quantum stochastic systems \cite{NJD08}, which may be viewed as a quantum analogue of the network synthesis problem for linear electrical systems \cite{AV73}. Nurdin, James and Doherty \cite{NJD08} have shown that any linear quantum
stochastic system can, in principle, be synthesized by a cascade
of simple one degree of freedom harmonic oscillators together with
a direct interaction Hamiltonian between the canonical operators
of these oscillators.  Alternative schemes have subsequently been proposed in \cite{Nurd09b,Pet09}, but we note that \cite{Pet09} considers a weaker type of realizability than in \cite{NJD08,Nurd09b},  i.e., transfer function realizability (cf. section \ref{sec:synthesis-problem}), and the results therein limited to a certain sub-class of linear quantum stochastic systems.

From an experimental perspective, direct bilinear
interaction Hamiltonians between independent harmonic oscillators
are challenging to implement for systems that have more than just
a few degrees of freedom and therefore it becomes important to
investigate what kind of systems can be realized by a pure cascade
connection. A key result of this paper is a necessary and sufficient condition  for a linear quantum
stochastic system to be realizable by only a cascade connection of one degree of freedom oscillators, without any direct interaction Hamiltonian. Moreover, we also show that the associated transfer functions of {\em all}
passive linear quantum stochastic systems can always be realized by a cascade connection,
proving in general the partial results of \cite{Pet09} without the additional assumptions made therein.

The organization of this paper is as follows. Section \ref{sec:prelim} sets up the notations and gives a brief overview
of linear quantum stochastic systems. Section \ref{sec:synthesis-problem} defines the synthesis
problem and discusses the
notions of strict realizability and transfer function
realizability. Section \ref{sec:cascade-conds} derives  a necessary and sufficient condition for a linear
quantum system to be realizable by a pure cascade connection of
one degree of freedom quantum harmonic oscillators, in both the
strict and transfer function sense of realizability. Section \ref{sec:passive-sys}
then introduces the class of passive linear quantum systems and
proves that all such systems are transfer functions realizable by
a pure cascade connection. Finally, section \ref{sec:conclusions} offers some
conclusions of this paper. 

\section{Preliminaries}
\label{sec:prelim} 

\subsection{Notation}
We shall use the following notations: $\isym=\sqrt{-1}$, $^*$ denotes the adjoint of a linear operator
as well as the conjugate of a complex number. If $A=[a_{jk}]$ then $A^{\#}=[a_{jk}^*]$, and $A^{\dag}=(A^{\#})^T$, where $^T$ denotes matrix transposition.  $\Re\{A\}=(A+A^{\#})/2$ and $\Im\{A\}=\frac{1}{2\isym}(A-A^{\#})$,
and denote the identity matrix by $I$ whenever its size can be
inferred from context and use $I_{n}$ to denote an $n \times n$
identity matrix. Similarly, $0$ denotes  a matrix with zero
entries whose dimensions can be determined from context. 
$\diag(M_1,M_2,\ldots,M_n)$  denotes a block diagonal matrix with
square matrices $M_1,M_2,\ldots,M_n$ on its diagonal block, and $\diag_{n}(M)$
a block diagonal matrix with the
square matrix $M$ appearing on its diagonal blocks $n$ times. 

\subsection{The class of linear quantum stochastic systems}
\label{sec:LQSS-prelim}
In this paper, we will be concerned with a class of quantum stochastic models of open (i.e., quantum systems  that can interact with an environment) {\em Markov} quantum systems  that are widely used and are standard in quantum optics. Such models have been in the physics and mathematical physics literature since the 1980's, see, e.g., \cite{HP84,GC85,KRP92,GZ04,WM10}. In particular, we focus on the special sub-class of {\em linear} quantum stochastic models, see, e.g.,  \cite[section 6.6]{WM10}, \cite[sections 3, 3.4.3, 5.3, chapters 7 and 10]{GZ04}, \cite[section 4]{EB05}, \cite[section 5]{BE08}, \cite{WD05,JNP06,NJP07b,NJD08,Yam06,Mab08,YNJP08,GGY08}.
These linear quantum stochastic models describe such quantum optical devices as optical cavities \cite[section 5.3.6]{BR04}\cite[chapter 7]{WaM94}, linear quantum amplifiers \cite[chapter 7]{GZ04}, and finite bandwidth squeezers \cite[chapter 10]{GZ04}. Following the terminology in \cite{JNP06,NJP07b,NJD08}, we shall refer to this class of models as {\em linear quantum stochastic systems}.

Suppose we have $n$ independent quantum harmonic oscillators labelled $1,\ldots,n$. Each oscillator $j$ has position and momentum operators $q_j$ and $p_j$, respectively. The position and momentum operators satisfy the canonical commutation relations $[q_j,p_k]=2\isym \delta_{jk}$, $[q_j,q_k]=0$, and $[p_j,p_k]=0$, where $\delta_{jk}$ denotes the Kronecker delta that takes on the value 1 only if $j=k$, but is otherwise 0. Equivalently, we may describe them in terms of the $2n$ annihilation and creation operators $a_1,a_1^*,a_2,a_2^*,\ldots,a_n,a_n^*$, with $a_j=(q_j+\isym p_j)/2$, satisfying the canonical commutation relations $[a_j,a^*_k]=  \delta_{jk}$, $[a_j,a_k]=0$ and $[a^*_j,a^*_k]=0$.  
The independent oscillators can be coupled to one or more external independent quantum fields, say $m$ of them.  In a Markov quantum system, the $m$ independent fields are essentially {\em quantum noises}  modelled by bosonic annihilation field operators $\mathcal{A}_1(t), \mathcal{A}_2(t),\ldots,\mathcal{A}_m(t)$ that can be defined on a separate Fock space (over $L^2(\Rbb)$) for each field operator  \cite{HP84,KRP92,BvHJ07}. For each $\mathcal{A}_j(t)$ there is a corresponding creation field operator $\mathcal{A}_j^*(t)$ that is defined on the same Fock space and is the operator adjoint of $\mathcal{A}_j(t)$, i.e., $\mathcal{A}_j^*(t)=\mathcal{A}_j(t)^*$. The field operators are adapted quantum stochastic processes with forward differentials $d\mathcal{A}_j(t)=\mathcal{A}_j(t+dt)-\mathcal{A}_j(t)$ and $d\mathcal{A}_j^*(t)=\mathcal{A}_{j}^*(t+dt)-\mathcal{A}_j^*(t)$ that have the quantum It\^{o} products \cite{HP84,KRP92,BvHJ07}:
\begin{align*}
&d\mathcal{A}_{j}(t)d\mathcal{A}_{k}(t)^*=\delta_{jk}dt;\, d\mathcal{A}_{j}^*(t)d\mathcal{A}_{k}(t)= 0;\,d\mathcal{A}_{j}(t)d\mathcal{A}_{k}(t)=0;\\
&d\mathcal{A}_{j}^*(t) d\mathcal{A}_{k}^*(t)=0;\,d\mathcal{A}_{k}(t)dt=0;\,d\mathcal{A}_{k}^*(t)dt=0.
\end{align*}
More informally, as in the quantum Langevin formalism,  we can express $\mathcal{A}_j(t) = \int_0^t \eta_j(s)ds$ and $\mathcal{A}^*_j(t) = \int_0^t \eta^*_j(s)ds$, where $\eta_j(t)$ for $j=1,\ldots,m$ are independent quantum white noise processes satisfying the informal commutation relations $[\eta_j(s),\eta_k^*(t)]=\delta_{jk} \delta(t-s)$ and  $[\eta_j(s),\eta_k(t)]=[\eta_j(s)^*,\eta_k^*(t)]=0$, where $\eta_j^*(t)=\eta_j(t)^*$, and $\delta(t)$ denotes the Dirac delta function.
 
Let us collect the position and momentum operators in the column vector $x$ defined as $x=(q_1,p_1,q_2,p_2,\ldots,q_n,p_n)^T$. Note that in terms of $x$, we may write the canonical commutation relations as $xx^T-(xx^T)^T=2\isym\Theta$ with $\Theta=\diag_{n}(J)$. We take the composite system of $n$ quantum harmonic oscillators to have a {\em quadratic Hamiltonian} $H$ given by $H=\frac{1}{2} x^T R x$, where $R$ is a real $2n \times 2n$ symmetric matrix. The oscillators are coupled to the quantum field $m$ via the informal singular interaction Hamiltonian $H_m=\isym (L_m \eta_m^*(t)-L_m^*\eta_m(t))$ \cite{GC85,GZ04}, where $L_m=K_m x$ with $K_m \in \Cbb^{1 \times 2n}$ is a linear coupling operator of the oscillator position and momentum operators to $\eta_m(t)$. Collect the coupling operators $L_1,L_2,\ldots,L_m$ together in one {\em linear coupling vector} $L= (L_1,L_2,\ldots,L_m)^T=K x$, with $K=[\begin{array}{cccc} K_1^T & K_2^T & \ldots & K_m^T\end{array}]^T$, and the field operators together as $\mathcal{A}(t)=(\mathcal{A}_1(t),\mathcal{A}_2(t),\ldots,\mathcal{A}_m(t))^T$. Then 
the {\em joint} evolution of the oscillators and the quantum fields is given by a unitary adapted process $U(t)$ satisfying the Hudson-Parthasarathy quantum stochastic differential equation (QSDE) \cite{HP84,KRP92,BvHJ07,GJ07}:
\begin{eqnarray*}
dU(t) = ({\rm tr}((S-I)^T d\Lambda(t)) +  d\mathcal{A}(t)^{\dag} L - L^{\dag}Sd\mathcal{A}(t) -(\isym H + \frac{1}{2}L^{\dag}L dt))U(t),
\end{eqnarray*}
where $S \in \Cbb^{m \times m}$ is a complex unitary matrix (i.e., $S^{\dag}S=SS^{\dag}=I$) called the {\em scattering matrix}, and $\Lambda(t)=[\Lambda_{jk}(t)]_{j,k=1,\ldots,m}$. The processes $\Lambda_{jk}(t)$ for $j,k=1,\ldots,m$ are adapted quantum stochastic processes that are referred to as {\em gauge processes}, and the forward differentials $d\Lambda_{jk}(t)=\Lambda_{jk}(t+dt)-\Lambda_{jk}(t) $ $j,k=1,\ldots,m$ have the quantum It\^{o} products:
$$
d\Lambda_{jk}(t)
d\Lambda_{j'k'}(t)\hspace*{-1pt}=\hspace*{-1pt}\delta_{kj'}d\Lambda_{jk'}(t),\hspace*{6pt}d\mathcal{A}_j(t)
d\Lambda_{kl}(t)\hspace*{-1pt}=\hspace*{-1pt}\delta_{jk}d\mathcal{A}_l(t),\hspace*{6pt}d\Lambda_{jk} d\mathcal{A}_l(t)
^*\hspace*{-1pt}=\hspace*{-1pt}\delta_{kl}d\mathcal{A}_j^*(t),
$$
with all other remaining cross products between $d\Lambda_{jk}(t)$ and either of $dt$, $d\mathcal{A}_{j'}(t)$ or $d\mathcal{A}^*_{k'}(t)$ being zero. Informally, we may express $\Lambda_{jk}(t)=\int_{0}^t \eta^*_j(s)\eta_k(s) ds$.

For any adapted processes $V(t)$ and $W(t)$ satisfying a quantum Ito stochastic differential equation, we have the {\em quantum Ito rule}  $d(V(t)W(t))=V(t)dW(t)+(dV(t)) W(t) + dV(t) dW(t)$.  Using the quantum Ito rule and the quantum Ito products given above, as well as exploiting the canonical commutation relations between the operators in $x$, the {\em Heisenberg evolution} $X(t)=U(t)^* x U(t)$ of the canonical operators in the vector $x$ satisfies the quantum stochastic differential equation, see \cite[section 4]{EB05}, \cite[section 5]{BE08}, \cite{JNP06,NJD08}:
\begin{align}
dX(t)&= d(U(t)^* x U(t)) = \tilde AX(t)dt+\tilde B\left[\begin{array}{c} d\mathcal{A}(t)
\\ d\mathcal{A}(t)^{\#} \end{array}\right];  X(0)=x, \notag\\
dY(t)&= d(U(t)^* \mathcal{A}(t)U(t)) = \tilde C x(t)dt+ \tilde Dd\mathcal{A}(t), \label{eq:qsde-out}
\end{align}
with $\tilde A=2\Theta(R+\Im\{K^{\dag}K\})$, $\tilde B=2\isym \Theta [\begin{array}{cc}
-K^{\dag}S & K^TS^{\#}\end{array}]$,
$\tilde C=K$, and $\tilde D=S$, where  $Y(t)=(Y_1(t),\ldots,Y_m(t))^T=U(t)^* \mathcal{A}(t) U(t)$ is a vector of
{\em output fields} that results from the interaction of the quantum harmonic oscillators and the incoming quantum fields   $\mathcal{A}(t)$. Note that the dynamics of $X(t)$ is linear, while $Y(t)$ depends linearly on $X(t)$ and $\mathcal{A}(t)$. We refer to $n$ as the {\em degrees of freedom} of the oscillators. If $n=1$, we shall often refer to
the linear quantum stochastic system as a one degree of freedom (open quantum harmonic) oscillator. 

Following \cite{GJ07}, we denote a  linear quantum stochastic system with  Hamiltonian $H$,  coupling vector $L$ and scattering matrix $S$ simply as $G=(S,L,H)$ or $G=(S,Kx,\half x^TRx)$. We also recall  the {\em concatenation product} $\boxplus$ and {\em series product} $\triangleleft$  for open Markov quantum systems \cite{GJ07} defined by $G_1 \boxplus G_2=(\diag(S_1,S_2),(L_1^T,L_2^T)^T,H_1+H_2)$, and $G_2 \triangleleft G_1=(S_2S_1,L_2+S_2L_1
,H_1+H_2+\Im\{L_2^{\dag}S_2L_1\})$. Since both products are associative, the products $G_1
\boxplus G_2 \boxplus \ldots \boxplus G_n$ and $G_n \triangleleft G_{n-1} \triangleleft \ldots \triangleleft G_1$ are unambiguously defined.

\section{Synthesis of linear quantum stochastic systems}
\label{sec:synthesis-problem}
The network synthesis problem for linear quantum stochastic
systems can be stated (in a strict sense, as explained below) 
as the problem of how to systematically
realize a given linear quantum stochastic system with a given
fixed set of matrix parameters $S,K,R$ from a bin of certain basic
quantum optical components; see \cite{NJD08} for details of these
basic components. A particular solution was proposed to the
synthesis problem, see \cite[Theorem 5.1]{NJD08}: Any linear
quantum stochastic system with $n$ degrees of freedom can be
synthesized via a quantum network consisting of a cascade
connection of $n$ one degree of freedom harmonic oscillators
together with a direct interaction Hamiltonian that is bilinear in
the canonical operators of the oscillators. Partition $R$ as
$R=[R_{jk}]_{j,k=1,\ldots,n}$ with $R_{jk} \in \Rbb^{2 \times
2}$ and $K$ as $K=[\begin{array}{cccc} K_1 & K_2 &\ldots &K_n
\end{array}]$ with $K_k \in \Cbb^{m \times 2}$. Then according to
\cite[Theorem 5.1]{NJD08} a system $G=(S,Kx,\half x^T R x)$ can be
decomposed as $G = (G_n \triangleleft G_{n-1} \triangleleft \cdots \triangleleft
G_1) \boxplus (0,0,H^d)$, where the $G_i$'s are (simpler) one degree of freedom open
harmonic oscillators $G_i=(S_i,K_ix_i,\half x_i^T R_{ii} x_i)$
($x_i=(q_i,p_i)^T$) with parameter values specified by the theorem,
and $H^d$ is a direct bilinear interaction Hamiltonian of the
form $H^d=\sum_{j=1}^{n-1}\sum_{k=j+1}^{n} x_j^T\left( R_{jk}-
\Im\{K_{k}^{\dag}K_j\}^T \right)x_k$. The work \cite{NJD08} then shows how each of the $G_i$'s can be
synthesized from the bin of given components and how $H^d$ can be
realized. However, in current practical experiments,
implementation of $H^d$ can be challenging for systems that have
more than just a few degrees of freedom. Therefore, it is of
interest to characterize the class of systems that can be
synthesized by pure cascade connection alone, that is, with $H^d \equiv 0$.

As alluded to at the beginning of this section, we emphasize that \cite{NJD08} considers a {\em strict}
type of realization problem, that is, it deals with how to
synthesize a {\em given} and {\em fixed} triplet
$\{S,L=Kx,H=\frac{1}{2}x^T R x\}$ that describes a linear quantum
stochastic system $G$. This type of strict realizability is
relevant, for instance, in cases where the internal dynamics
$X(t)$ may represent some (continuous time) quantum information
processing algorithm and thus needs to be realized as given.
However, for some linear quantum control problems such as robust
disturbance attenuation \cite{JNP06} and LQG synthesis
\cite{NJP07b}, the internal dynamics are
inconsequential. In this case there is freedom to
modify/transform these dynamics and what is important is the
associated (classical) complex transfer function associated with
the system matrices $(A,B,C,D)$\footnote{As in \cite{JNP06}, here
we shall not define the transfer function of quantum systems, but associate to
 a quantum system $G$ with system matrices $(A,B,C,D)$ a {\em classical}, {\em doubled-up} \cite{GJN10}, transfer function
$G(s)=[\begin{array}{cc} C^T & C^{\dag}\end{array}]^T(sI-A)^{-1}B+\diag(D,D^{\#})$. However, we also remark that $G(s)$ can actually be properly interpreted as a genuine transfer function for the quantum system following \cite{GJN10,GGY08,YK03a,YK03b}, this being a common practice in the physics community via Fourier transform methods \cite{WaM94,GZ04}. In any case, we are dealing with the same object $G(s)$ and thus the particular interpretation attached to it becomes immaterial for our purpose.}. As is well known, a transfer function is
invariant under a similarity transformation of the system matrices
$(A,B,C,D) \mapsto (VAV^{-1},VB,CV^{-1},D)$ for any invertible
matrix $V$. However, for linear quantum systems the transformation
matrix $V$ for a similarity transformation is restricted in that
it has to be a {\em symplectic} matrix: $V$ is real and satisfies the
condition $V\Theta V^T=\Theta$. This ensures that the
transformed variable $Z(t)=VX(t)$ satisfies the required canonical commutation 
relations (CCR) of quantum mechanics: $Z(t)Z(t)^T-(Z(t)Z(t)^T)^T=2\isym \Theta$, so the system remains
physical. Note that the set of all symplectic matrices of a fixed
dimension form a group and in particular $V^{-1}$ is again a
symplectic matrix. Such a similarity transformation in quantum
systems corresponds to replacing $G=(S,Kx,\half x^T Rx)$ with
$G'=(S,KV^{-1}x,\half x^T V^{-T}RV^{-1}x)$. This motivates us to
introduce the following definition:

\begin{definition}
\label{df:io-equiv} Let $G=(S,Kx,\half x^T R x)$ and
$G'=(S',K'x,\half x^T R' x)$ be two linear quantum stochastic
systems. Then $G'$ is said to be {\em transfer function
equivalent} to $G$ or is a {\em transfer function realization} of
$G$ if  $S'=S$ and there exists a symplectic matrix $V$
such that $R'=V^{-T}RV^{-1}$, $K'=KV^{-1}$ (or, equivalently, $R=V^{T}R'V$
and $K=K'V$). $G$ is then said to be {\em transfer function
realizable} by $G'$, and vice-versa.
\end{definition}

\begin{remark}
It is important to note that two transfer function equivalent
systems $G$ and $G'$ will not necessarily generate the same
input-output dynamics $(A(t),Y(t))$ for all $t \geq 0$. This is
because although they can have different parameters, they always
have the same initial value $X(0)=x$, whilst for quantum systems $x$
clearly cannot be zero due to the CCR condition. If the $A$ matrix of $G$ is Hurwitz then
the input-output dynamics of $G$ and $G'$ converge in the limit $t
\rightarrow \infty$. However, as remarked earlier, for some linear
quantum control design objectives internal dynamics and initial
conditions do not play an essential role, only the transfer
function does.
\end{remark}

\section{Conditions for realizability by a pure cascade connection}
\label{sec:cascade-conds}
In this section we state and prove a theorem that characterizes
the class of linear quantum stochastic systems that can be
realized simply by a cascade connection of one degree of freedom (open quantum harmonic) oscillators. Let us first
introduce the following notation: $S_{k \twoheadleftarrow
j}=S_k \cdots S_{j+1} S_j$ for all $j<k$, $S_{k \twoheadleftarrow
k}=S_k$ and $S_{k \twoheadleftarrow k+1}=I_{m}$, and let
$x_i=(q_i,p_i)^T$ for $i=1,\ldots,n$ so that
$x=(x_1^T,\ldots,x_n^T)^T$, where $xx^T-(xx^T)^T=2\isym \Theta$.
Moreover, we introduce the following terminology: A square matrix
$F$ is said to be lower $2 \times 2$ block triangular if it has a
lower block triangular form when partitioned into $2 \times 2$
blocks:
\begin{eqnarray*}
F= \left[\begin{array}{ccccc} F_{11} & 0_{2 \times 2} & 0_{2
\times 2} & \ldots & 0_{2 \times 2}\\
F_{21} & F_{22} & 0_{2 \times 2} & \ldots & 0_{2 \times 2}\\
\vdots & \ddots & \ddots & \ddots & \vdots \\
F_{n1} & F_{n2} & \ldots & \ldots& F_{nn}
\end{array}\right],
\end{eqnarray*}
where $F_{jk}$, $j\leq k$, is of dimension $2 \times 2$. We start
with the following lemma:

\begin{lemma}
\label{lm:casc-struc} The cascade connection $G_n \triangleleft
G_{n-1} \triangleleft \cdots \triangleleft G_1$ of one degree of
freedom harmonic oscillators $G_i=(S_i,K_ix_i,\frac{1}{2}x_i^T R_i
x_i)$ $(i=1,\ldots,n)$ realizes a linear quantum stochastic system
$G=(S,Kx,\frac{1}{2}x^T R x)$ with
$
S=S_{n \twoheadleftarrow 1},\;
K = \left[\begin{array}{ccccc} S_{n \twoheadleftarrow 2} K_1 &
S_{n\twoheadleftarrow 3} K_2 & \ldots & K_n\end{array}\right],\;
R = [R_{ij}]_{i,j=1,\ldots,n}$,
where $R_{jj}=R_j$, $R_{kj}=\Im\{K_k^{\dag}S_{ k \twoheadleftarrow
 j+1}K_j\}$ whenever $k<j$ and $R_{jk}=R_{kj}^T$ whenever $j>k$.
In particular, $R+\Im\{K^{\dag}K\}$ is lower $2 \times 2$ block
triangular.
\end{lemma}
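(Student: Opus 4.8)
The plan is to argue by induction on the number $n$ of oscillators, leaning on the associativity of the series product $\triangleleft$ and the closed form $G_2 \triangleleft G_1=(S_2 S_1,\, L_2+S_2 L_1,\, H_1+H_2+\Im\{L_2^{\dag}S_2 L_1\})$ recalled in Section~\ref{sec:LQSS-prelim}. The base case $n=1$ is trivial, since then $G=G_1$, $S_{1\twoheadleftarrow 1}=S_1$ and $S_{1\twoheadleftarrow 2}=I_m$, and a single $2\times2$ block is vacuously lower block triangular. For the inductive step I would write $G_n\triangleleft G_{n-1}\triangleleft\cdots\triangleleft G_1=G_n\triangleleft\tilde G$, where by the induction hypothesis $\tilde G=G_{n-1}\triangleleft\cdots\triangleleft G_1=(S_{n-1\twoheadleftarrow 1},\tilde K\tilde x,\frac{1}{2}\tilde x^T\tilde R\tilde x)$ with $\tilde x=(x_1^T,\ldots,x_{n-1}^T)^T$, $\tilde K=[\,S_{n-1\twoheadleftarrow 2}K_1\ \cdots\ K_{n-1}\,]$, and $\tilde R$ of the asserted block form relative to the first $n-1$ oscillators.

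Feeding $G_n$ and $\tilde G$ into the series-product formula then comes down to three checks. The scattering matrix is $S_n S_{n-1\twoheadleftarrow 1}=S_{n\twoheadleftarrow 1}$ by definition of $S_{k\twoheadleftarrow j}$; the coupling vector is $L=K_n x_n+S_n\tilde K\tilde x$, so $K=[\,S_n\tilde K\ \ K_n\,]$, and substituting $\tilde K$ and using $S_n S_{n-1\twoheadleftarrow j+1}=S_{n\twoheadleftarrow j+1}$ block-column by block-column reproduces the claimed $K$. The one genuinely computational point is the extra Hamiltonian term $\Im\{L_n^{\dag}S_n\tilde L\}=\Im\{x_n^T K_n^{\dag}S_n\tilde K\tilde x\}$, with $L_n=K_n x_n$, $\tilde L=\tilde K\tilde x$. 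Here I would use that this is a scalar operator and that the components of $x_n$ commute with those of $\tilde x$ and are self-adjoint, so that for $M:=K_n^{\dag}S_n\tilde K$ one has $(x_n^T M\tilde x)^*=x_n^T M^{\#}\tilde x$, whence $\Im\{x_n^T M\tilde x\}=x_n^T\Im\{M\}\tilde x$. Since the $j$-th block column of $\Im\{M\}$ is $\Im\{K_n^{\dag}S_{n\twoheadleftarrow j+1}K_j\}=R_{nj}$, this term equals $\sum_{j=1}^{n-1}x_n^T R_{nj}x_j=\frac{1}{2}\sum_{j=1}^{n-1}(x_n^T R_{nj}x_j+x_j^T R_{nj}^T x_n)$; adding it to $\frac{1}{2}\tilde x^T\tilde R\tilde x+\frac{1}{2}x_n^T R_n x_n$ assembles $\frac{1}{2}x^T R x$ with the $(n-1)\times(n-1)$ leading block given by the induction hypothesis, $R_{nn}=R_n$, and $R_{nj}=R_{jn}^T$ filling the last block row and column — exactly the stated form of $R$.

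For the final assertion I would expand $\Im\{K^{\dag}K\}$ block by block from the $K$ just obtained. Writing $\hat K_j=S_{n\twoheadleftarrow j+1}K_j$ for the $j$-th block column of $K$, for $j<k$ the telescoping identity $S_{n\twoheadleftarrow j+1}=S_{n\twoheadleftarrow k+1}S_{k\twoheadleftarrow j+1}$ together with unitarity of the $S_i$ gives $(K^{\dag}K)_{jk}=\hat K_j^{\dag}\hat K_k=K_j^{\dag}(S_{k\twoheadleftarrow j+1})^{\dag}K_k=(K_k^{\dag}S_{k\twoheadleftarrow j+1}K_j)^{\dag}$. Combined with the elementary identity $\Im\{A^{\dag}\}=-\Im\{A\}^T$, this yields $\Im\{(K^{\dag}K)_{jk}\}=-\Im\{K_k^{\dag}S_{k\twoheadleftarrow j+1}K_j\}^T=-R_{kj}^T=-R_{jk}$ for $j<k$, so every strictly upper $2\times2$ block of $R+\Im\{K^{\dag}K\}$ cancels and $R+\Im\{K^{\dag}K\}$ is lower $2\times2$ block triangular.

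I expect the main obstacle to be purely careful bookkeeping: keeping the operator adjoint $(\cdot)^*$ straight from the entrywise conjugate $(\cdot)^{\#}$ in the cross term (and justifying $(x_n^T M\tilde x)^*=x_n^T M^{\#}\tilde x$ from self-adjointness and cross-commutativity of the canonical operators), and correctly telescoping the products $S_{k\twoheadleftarrow j}$ under the various index shifts $j\mapsto j+1$; once the series-product formula is substituted, the remainder is routine.
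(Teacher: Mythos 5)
Your proposal is correct and follows essentially the same route as the paper: the paper also builds the cascade iteratively via the series product ($G_{(k)}=G_k\triangleleft G_{(k-1)}$), which is exactly your induction step, and identifies the Hamiltonian cross term $x_k^T\Im\{K_k^{\dag}S_{k\twoheadleftarrow j+1}K_j\}x_j$ with the off-diagonal blocks of $R$. The only cosmetic difference is in the final triangularity claim, where the paper invokes the Hermitian symmetry of $K^{\dag}K$ to relate its upper blocks to the lower ones rather than telescoping $S_{n\twoheadleftarrow j+1}=S_{n\twoheadleftarrow k+1}S_{k\twoheadleftarrow j+1}$ directly as you do; the cancellation argument is otherwise identical.
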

\begin{proof}
The proof proceeds along the lines of the proof of \cite[Theorem
5.1]{NJD08}. By the series product formula (cf. section \ref{sec:LQSS-prelim}) for the cascade of two one degree of freedom
oscillators $G_1=(S_1,K_1x_1,\frac{1}{2}x_1^T R_1
x_1)$ and $G_2=(S_2,K_2 x_2,\frac{1}{2}x_2^T R_2 x_2)$, we get the oscillator $G_{(2)}=G_2 \triangleleft G_1 = (S_2S_1, S_2K_1x_1+K_2 x_2,
\frac{1}{2}x_1^T R_1x_1 + \frac{1}{2} x_2^T R_2x_2 +
x_2^T\Im\{K_2^{\dag}S_2K_1\}x_1)$. Letting
$x_{(2)}=(x_1^T,x_2^T)^T$, the latter may be compactly written as:
$G_{(2)}=(S_{(2)},K_{(2)}x_{(2)},\frac{1}{2}x_{(2)}^T R_{(2)}
x_{(2)}^T)$ with $S_{(2)}=S_{2 \twoheadleftarrow 1}=S_2 S_1$,
$K_{(2)}= [\begin{array}{cc} S_2K_1 & K_2 \end{array}]$ and
$
R_{(2)}=\left[\small \begin{array}{cc} R_1 & \Im\{K_2^{\dag}S_{2 \twoheadleftarrow 2}K_1\}^T  \\
\Im\{K_2^{\dag}S_{2 \twoheadleftarrow 2}K_1\} & R_{2}
\end{array} \normalsize \right]
$. Repeating the computation for $G_{(k)}=G_{k} \triangleleft G_{(k-1)}$
iteratively for $k=3,\ldots,n-1$ and writing
$x_{(k)}=(x_1^T,x_2^T,\ldots,x_k^T)^T$ and
$G_{(k)}=(S_{(k)},K_{(k)}x_{(k)},\frac{1}{2}x_{(k)}^TR_{(k)}x_{(k)})$
at each iteration $k$, we arrive at the desired result with $G=G_{(n)}$,
$S=S_{(n)}$, $K=K_{(n)}$ and $R=R_{(n)}$ as stated in the lemma.

To see that $R+\Im\{K^{\dag}K\}$ is lower $2 \times 2$ block triangular, we note that $K^{\dag}K$ may
be expressed as follows:
\begin{eqnarray*}
K^{\dag}K 
&=&\left[\small \begin{array}{ccccc} K_1^{\dag}K_1 &
K_1^{\dag}S_2^{\dag}K_2 & K_1^{\dag}S_{3 \twoheadleftarrow 2}^{\dag}K_2 & \ldots & K_1^{\dag}S_{n \twoheadleftarrow 2}^{\dag}K_n\\
K_2^{\dag}S_2 K_1 & K_2^{\dag}K_2 & K_2^{\dag}S_{3}^{\dag}K_3 & \ldots &K_2^{\dag}S_{n \twoheadleftarrow 3}^{\dag}K_n\\
\vdots & \ddots & \ddots & \ddots & \vdots\\
K_n^{\dag}S_{n \twoheadleftarrow 2}K_1 & K_n^{\dag}S_{n
\twoheadleftarrow 3}K_2 & \ldots & K_n^{\dag}S_n K_{n-1} & K_n^{\dag}K_n \end{array}\normalsize \right].\\
\end{eqnarray*}
Note that since $K^{\dag}K$ is by definition a Hermitian matrix,
the $2 \times 2$ block elements above the diagonal blocks are the
Hermitian transpose of the corresponding elements below the
diagonal blocks. It follows therefore that the imaginary part of
the block $(K^{\dag}K)_{jk}$ at block row $j$ and block column $k$
must satisfy the relation:
$\Im\{(K^{\dag}K)_{jk}\}=-\Im\{(K^{\dag}K)_{kj}\}^T$. However, from
the expression for $R$ derived above and its symmetry, we already
have that if $k>j$:
$$R_{jk}=R_{kj}^T=\Im\{K_k^{\dag}S_{k \twoheadleftarrow j+1} K_j\}^T=\Im\{(K^{\dag}K)_{kj}\}^T.$$
Therefore, the off-diagonal upper block elements of $R$ cancel 
those of $\Im\{K^{\dag}K\}$ when they are summed and we conclude
that the matrix $R+\Im\{K^{\dag}K\}$ is a lower $2 \times 2$ block
triangular matrix.
\end{proof}

Recall again the partitioning of $R$ as $R=[R_{jk}]_{j,k=1,\ldots,n}$
with $R_{jk} \in \Rbb^{2 \times 2}$ and of $K$ as
$K=[\begin{array}{cccc} K_1 & K_2 &\ldots &K_n
\end{array}]$ with $K_k \in \Cbb^{m \times 2}$. We may now state the
following result:

\begin{theorem}
\label{th:casc-struc} A linear quantum stochastic system
$G=(S,Kx,\frac{1}{2}x^T R x)$ with $n$ degrees of freedom is
realizable by a pure cascade of $n$ one degree of freedom harmonic
oscillators (without a direct interaction Hamiltonian) if and only if the $A$
matrix given by $A=2\Theta(R+\Im\{K^{\dag}K\})$ is a lower block
triangular matrix with blocks of size $2 \times 2$. If this
condition is satisfied then $G$ can be explicitly constructed as
the cascade connection $G_n \triangleleft G_{n-1} \triangleleft
\ldots \triangleleft G_1$ with $G_1=(S,K_1x_1,\half x_1^T
R_{11}x_1)$, and $G_k=(I,K_kx_k,\half x_k^T R_{kk} x_k)$ for
$k=2,\ldots,n$.
\end{theorem}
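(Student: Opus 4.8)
The plan is to derive both implications from Lemma~\ref{lm:casc-struc}, together with the elementary observation that the block-diagonal invertible matrix $\Theta=\diag_n(J)$, and its inverse $\Theta^{-1}=\diag_n(J^{-1})$, neither create nor destroy lower $2\times 2$ block triangular structure: since the $(j,k)$ block of $\Theta M$ equals $JM_{jk}$, the product $\Theta M$ is lower $2\times 2$ block triangular precisely when $M$ is.

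For the ``only if'' direction, suppose $G=(S,Kx,\half x^TRx)$ is realized by a pure cascade $G_n\triangleleft\cdots\triangleleft G_1$ of one degree of freedom oscillators. By Lemma~\ref{lm:casc-struc}, $R+\Im\{K^{\dag}K\}$ is then lower $2\times 2$ block triangular, and left-multiplying by $2\Theta$ shows that $A=2\Theta(R+\Im\{K^{\dag}K\})$ is lower $2\times 2$ block triangular as well.

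For the ``if'' direction, assume $A$ is lower $2\times 2$ block triangular; then $R+\Im\{K^{\dag}K\}=\half\Theta^{-1}A$ is lower $2\times 2$ block triangular. Each $R_{kk}$ is symmetric, being a diagonal block of the symmetric matrix $R$, and $S$ is unitary, so $G_1=(S,K_1x_1,\half x_1^TR_{11}x_1)$ and $G_k=(I,K_kx_k,\half x_k^TR_{kk}x_k)$ for $k=2,\ldots,n$ are legitimate one degree of freedom oscillators. By Lemma~\ref{lm:casc-struc}, the cascade $G_n\triangleleft\cdots\triangleleft G_1$ equals a system $\hat G=(\hat S,\hat Kx,\half x^T\hat Rx)$ whose parameters are given by the lemma; since every scattering matrix after the first equals $I$, all partial products $S_{n\twoheadleftarrow j+1}$ appearing there collapse to $I$, so $\hat S=S_{n\twoheadleftarrow 1}=S$ and $\hat K=[\begin{array}{cccc}K_1&K_2&\cdots&K_n\end{array}]=K$ immediately, while the lemma also gives $\hat R_{jj}=R_{jj}$ for every $j$.

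It then remains to verify $\hat R=R$, and the one mildly delicate point is that this is cleanest to obtain structurally rather than by computing the off-diagonal blocks of $\hat R$ head on. Both $\hat R$ and $R$ are real symmetric; by the final sentence of Lemma~\ref{lm:casc-struc} applied to $\hat G$, and using $\hat K=K$, the matrix $\hat R+\Im\{K^{\dag}K\}$ is lower $2\times 2$ block triangular, and we have arranged that $R+\Im\{K^{\dag}K\}$ is too, so $\hat R-R$ is lower $2\times 2$ block triangular. A symmetric matrix that is lower $2\times 2$ block triangular is necessarily block diagonal, and the diagonal blocks of $\hat R-R$ vanish; hence $\hat R=R$, so $\hat G=G$ and the prescribed cascade realizes $G$ strictly. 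The remaining effort is routine bookkeeping, with the substantive content supplied entirely by Lemma~\ref{lm:casc-struc}, so I do not expect a genuine obstacle beyond this structural step.
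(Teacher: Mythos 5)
Your proof is correct and follows essentially the same route as the paper: both directions rest on Lemma~\ref{lm:casc-struc}, with the same explicit choice $G_1=(S,K_1x_1,\half x_1^TR_{11}x_1)$, $G_k=(I,K_kx_k,\half x_k^TR_{kk}x_k)$ in the ``if'' direction. The only (minor, and perfectly valid) deviation is how you confirm $\hat R=R$: the paper reads off the upper off-diagonal blocks of $R$ as $\Im\{K_k^{\dag}K_j\}^T$ from the cancellation and matches them to the lemma's formula, whereas you conclude it structurally from the fact that $\hat R-R$ is symmetric, lower $2\times 2$ block triangular, and has vanishing diagonal blocks.
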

\begin{proof}
The proof of the only if part follows directly from Lemma
\ref{lm:casc-struc}, as follows. If $G$ can be realized by a pure
cascade connection of $n$ one degree of freedom harmonic
oscillators then by the lemma, $R+\Im\{K^{\dag}K\}$ is a lower
$2 \times 2$ block triangular matrix. However, since $\Theta$ is $2 \times 2$ block
diagonal, it follows that the matrix
$A=2\Theta(R+\Im\{K^{\dag}K\})$ is also a lower $2 \times 2$ block triangular
matrix.

Conversely, the if part of the proof can be shown by explicitly
constructing a pure cascade connection of $n$ one degree
oscillators that realizes $G$. If the $A$ matrix associated with
$G$ is lower $2 \times 2$ block triangular then so is the matrix $\half
\Theta^{-1}A=-\half \Theta A=R+\Im\{K^{\dag}K\}$. As we already saw
in the proof of Lemma \ref{lm:casc-struc}, this structure implies that $R_{jk}=\Im\{K_j^{\dag}K_k\}$
whenever $k>j$ and $R_{kj}=\Im\{K_j^{\dag}K_k\}^T$ if $k<j$. Now,
using the notation of Lemma \ref{lm:casc-struc}, let us define the
one degree of freedom harmonic oscillators $G_k$ for
$k=1,\ldots,n$ as $G_1=(S,K_1 x_1,\half  x_1^T R_{11} x_1)$, and
$G_k=(I,K_k x_k,\half x_k^T R_{kk}x_k)$ for $k=2,\ldots,n$. It follows
from  Lemma \ref{lm:casc-struc} that $G_n \triangleleft
G_{n-1} \triangleleft \cdots \triangleleft G_1 = (S,K x, \half x^T R
x)$. That is, this cascade connection realizes $G$.
\end{proof}

Theorem \ref{th:casc-struc} has a direct consequence on the weaker
notion of transfer function realization of a linear quantum
system. The main result is the following corollary:

\begin{corollary}
\label{cl:casc-struc}A linear quantum system $G=(S,Kx,\half x^T R x)$ is
transfer function realizable by a pure cascade connection of one
degree of freedom harmonic oscillators if and only if there is a
symplectic transformation matrix $V$ such that the linear quantum
stochastic system $G'=(S,KV^{-1} x, \half x^T V^{-T}RV^{-1}x)$ has an
$A$ matrix which is lower $2 \times 2$ block triangular.
\end{corollary}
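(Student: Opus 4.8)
The plan is to obtain Corollary~\ref{cl:casc-struc} as an essentially immediate consequence of Theorem~\ref{th:casc-struc} together with Definition~\ref{df:io-equiv}, by unwinding what ``transfer function realizable by a pure cascade connection'' means: namely, that $G$ is transfer function equivalent (in the sense of Definition~\ref{df:io-equiv}) to some linear quantum stochastic system that is strictly realizable by a pure cascade of one degree of freedom oscillators.

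First I would treat the ``if'' direction. Assume a symplectic $V$ is given for which $G'=(S,KV^{-1}x,\half x^T V^{-T}RV^{-1}x)$ has a lower $2\times 2$ block triangular $A$ matrix $A'=2\Theta(V^{-T}RV^{-1}+\Im\{(KV^{-1})^{\dag}KV^{-1}\})$. Applying the ``if'' part of Theorem~\ref{th:casc-struc} to $G'$ — partitioning $R'=V^{-T}RV^{-1}$ into $2\times 2$ blocks and $K'=KV^{-1}$ into $m\times 2$ blocks — produces an explicit pure cascade $G_n\triangleleft G_{n-1}\triangleleft\cdots\triangleleft G_1$ of one degree of freedom oscillators that strictly realizes $G'$. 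Since $V$ is symplectic, Definition~\ref{df:io-equiv} says $G'$ is a transfer function realization of $G$; hence this cascade, which coincides with $G'$ as a triple, is also a transfer function realization of $G$.

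Next the ``only if'' direction. Suppose $G$ is transfer function realizable by a pure cascade $\tilde G=G_n\triangleleft\cdots\triangleleft G_1$ of one degree of freedom oscillators. By Definition~\ref{df:io-equiv} there is a symplectic $V$ with $\tilde G=(S,KV^{-1}x,\half x^T V^{-T}RV^{-1}x)$, i.e.\ $\tilde G$ is exactly the system $G'$ appearing in the statement. Because $G'=\tilde G$ is a pure cascade of one degree of freedom oscillators, the ``only if'' half of Theorem~\ref{th:casc-struc} forces its $A$ matrix to be lower $2\times 2$ block triangular, which is the asserted condition.

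I do not expect a genuine obstacle here; the only points requiring care are notational. Definition~\ref{df:io-equiv} is stated symmetrically in $V$ and $V^{-1}$, and since the symplectic matrices form a group one may pass freely between the two descriptions; and Theorem~\ref{th:casc-struc} must be invoked with the \emph{transformed} parameters $(S,K',R')$ rather than with $(S,K,R)$, so that the block partitions in its statement refer to $R'=V^{-T}RV^{-1}$ and $K'=KV^{-1}$.
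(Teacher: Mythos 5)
Your proposal is correct and follows essentially the same route as the paper's proof: it simply unwinds Definition~\ref{df:io-equiv} (transfer function realizability by cascade means transfer function equivalence to a strictly cascade-realizable system via some symplectic $V$) and then applies both directions of Theorem~\ref{th:casc-struc} to the transformed triple $(S,K',R')$. The paper states this in one combined if-and-only-if chain rather than splitting the two directions, but the content is identical.
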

\begin{proof}
By Definition \ref{df:io-equiv}, $G$ is transfer function realizable by a pure cascade
connection if and only if there exists a symplectic matrix $V$ such
that $G'=(S,KV^{-1} x, \half x^T
V^{-T}RV^{-1}x)$ is realizable by a pure cascade
connection. But from Theorem
\ref{th:casc-struc} this is true if and only if the $A$ matrix
associated with $G'$ (i.e., $A=2\Theta(R'+\Im\{K'^{\dag}K'\})$ is
lower $2 \times 2$ block triangular. \end{proof}

\section{Passive linear quantum stochastic systems}
\label{sec:passive-sys}
In this section it will be shown that the class of passive linear
quantum stochastic systems (as defined below) are transfer function realizable by a
cascade connection. In \cite{Pet09} it has been  shown by a constructive algorithm that a ``generic'' sub-class of such systems are transfer function realizable by pure cascading, the generic systems being required to satisfy assumptions on the distinctness of the eigenvalues and invertibility of certain matrices. In this section we remove such assumptions, and show by exploiting the algebraic structure of passive systems that the result is valid in general for {\em all} passive  systems.

For $k=1,\ldots,n$, let $a_k=(q_k+\isym p_k)/2$ be the
annihilation operators for mode $k$ and define
$a=(a_1,\ldots,a_n)^T$. Then $a$ satisfies the CCR
$
\left[\begin{array}{c} a \\
a^{\#}\end{array}\right] [
\begin{array}{cc} a^{\dag} & a^T \end{array} ]-\left(\left[\begin{array}{c} a^{\#} \\ a \end{array}\right] [
\begin{array}{cc} a^T & a^{\dag} \end{array} ]\right)^T = \diag(I_n,-I_n).
$
Moreover, note that $(a^T,a^{\dag})^T=[\begin{array}{cc} \Sigma^T & \Sigma^{\dag}\end{array}]^Tx$ with 
$$
\Sigma =\left[\begin{array}{ccccccc} \half & \half \isym & 0 & 0 & 0 &
\ldots & 0\\
0 & 0 & \half & \half \isym & 0 &  \ldots & 0\\
\vdots & \ddots & \ddots & \ddots & \ddots & \ddots & \vdots \\
0 & \ldots & \ldots & \ldots & 0 & \half & \half \isym
\end{array}\right].
$$
We also make note that $\left[\begin{array}{c} \Sigma \\
\Sigma^{\#} \end{array} \right]^{-1}=2\, [\begin{array}{cc}
\Sigma^{\dag} & \Sigma^{T} \end{array}]$ and from the
relation $\left[\begin{array}{c} \Sigma \\
\Sigma^{\#} \end{array} \right]2\,[\begin{array}{cc}
\Sigma^{\dag} & \Sigma^{T} \end{array} ]=I$ we have the
identities:
\begin{eqnarray}
\Sigma \Sigma^{\dag}=I/2=\Sigma^{\#}\Sigma^T ;\;
\Sigma \Sigma^{T}=0=\Sigma^{\#}\Sigma^{\dag}. \label{eq:Sigma-id}
\end{eqnarray}
Therefore, we also have
$$
x= \left[\begin{array}{c} \Sigma \\
\Sigma^{\#} \end{array} \right]^{-1}\left[\begin{array}{c} a \\
a^{\#}\end{array}\right]=2\, [\begin{array}{cc} \Sigma^{\dag} &
\Sigma^{T} \end{array} ]\,\left[\begin{array}{c} a \\
a^{\#}\end{array}\right].
$$
A system $G=(S,Kx,\half x^T R x)$ is said to be {\em passive} if
we can write $H=\half x^T Rx = \half a^{\dag} \tilde R a+c$ and
$L=Kx=\tilde K a$ for some complex $n \times n$ {\em Hermitian}
matrix $\tilde R$ , a complex $m \times n$ (here $m$ again denotes the
number of input and output fields in and out of $G$) matrix
$\tilde K$, and some {\em real} constant $c$. As discussed in \cite{Nurd09b}, here the term passive for such systems is physically motivated since they can be implemented using only passive components like optical cavities, mirrors, beamsplitters and phase shifters; this  follows from Theorem 5.1 of \cite{NJD08} and the constructions shown in section 6 of that paper. Also shown in \cite{Nurd09b}, we can express $\half a^{\dag} \tilde R a$ and $\tilde K a$ in the form $
\half a^{\dag} \tilde R a =  \half  x^T \Re\{\Sigma^{\dag}\tilde R\Sigma\}x - \frac{1}{4} \sum_{j=1}^n \tilde R_{jj}$ and $\tilde K a=  \tilde K\Sigma x$. Therefore, we may set $R = \Re\{\Sigma^{\dag}\tilde R\Sigma\}$ and $K = \tilde K \Sigma$. Note also from \cite{Nurd09b} that the $2 \times 2$ block diagonal elements $\{R_{jj};\;j=1,\ldots,n\}$ is of the form $R_{jj}=\lambda_jI_{2}$ for some $\lambda_j \in \Rbb$ for all $j$. Now we shall derive some properties of $A$ and show that there exists a unitary and symplectic matrix that transforms it into a
 lower  $2 \times 2$ block triangular matrix.
\begin{lemma}
\label{lm:A-decomp}$ [\begin{array}{cc} \Sigma^T  & \Sigma^{\dag}\end{array}]^T
 A\, [\begin{array}{cc} \Sigma^{\dag} &
\Sigma^T
\end{array} ]=\diag(M,M^{\#})$, where
$
M=\half \Sigma \Theta \Sigma^{\dag}(\tilde R-\isym \tilde K^{\dag}
\tilde K)$.
\end{lemma}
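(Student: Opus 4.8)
The plan is to pass to the annihilation-operator picture: reduce the matrix $A=2\Theta(R+\Im\{K^{\dag}K\})$ to a bilinear expression in $\Sigma,\Sigma^{\dag},\Sigma^{T},\Sigma^{\#}$ and the data $\tilde R$ and $\tilde K^{\dag}\tilde K$, and then conjugate it by $[\begin{array}{cc}\Sigma^{\dag}&\Sigma^{T}\end{array}]$, collapsing everything with the identities \eqref{eq:Sigma-id} together with the algebraic consequences of the CCR for $a$ displayed just above the lemma.

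First I would set $P:=\tilde R-\isym\tilde K^{\dag}\tilde K$ and rewrite $R+\Im\{K^{\dag}K\}$. Substituting $R=\Re\{\Sigma^{\dag}\tilde R\Sigma\}$ and $K=\tilde K\Sigma$, writing $\Re\{Z\}=(Z+Z^{\#})/2$ and $\Im\{Z\}=(Z-Z^{\#})/(2\isym)$, and using $(\Sigma^{\dag}W\Sigma)^{\#}=\Sigma^{T}W^{\#}\Sigma^{\#}$, a short computation collapses the four terms into $R+\Im\{K^{\dag}K\}=\half(\Sigma^{\dag}P\Sigma+\Sigma^{T}P^{\#}\Sigma^{\#})$, whence $A=\Theta\Sigma^{\dag}P\Sigma+\Theta\Sigma^{T}P^{\#}\Sigma^{\#}$. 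I would also record the ``$\Theta$-sandwich'' identities: substituting $[\begin{array}{cc}a^{T}&a^{\dag}\end{array}]^{T}=[\begin{array}{cc}\Sigma^{T}&\Sigma^{\dag}\end{array}]^{T}x$ into the CCR for $a$ and using $xx^{T}-(xx^{T})^{T}=2\isym\Theta$ gives $2\isym\,[\begin{array}{cc}\Sigma^{T}&\Sigma^{\dag}\end{array}]^{T}\Theta\,[\begin{array}{cc}\Sigma^{\dag}&\Sigma^{T}\end{array}]=\diag(I_{n},-I_{n})$, that is, $\Sigma\Theta\Sigma^{\dag}=-\frac{\isym}{2}I_{n}$, $\Sigma^{\#}\Theta\Sigma^{T}=\frac{\isym}{2}I_{n}$ and $\Sigma\Theta\Sigma^{T}=\Sigma^{\#}\Theta\Sigma^{\dag}=0$ (these can instead be checked directly from $\Theta=\diag_{n}(J)$ and the explicit form of $\Sigma$).

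Finally I would assemble the blocks. Noting that $[\begin{array}{cc}\Sigma^{T}&\Sigma^{\dag}\end{array}]^{T}=\left[\begin{array}{c}\Sigma\\\Sigma^{\#}\end{array}\right]$, write the target as a $2\times2$ array of $n\times n$ blocks and expand each block into the two terms coming from $A=\Theta\Sigma^{\dag}P\Sigma+\Theta\Sigma^{T}P^{\#}\Sigma^{\#}$, collapsing the inner factors by \eqref{eq:Sigma-id} and the $\Theta$-sandwich identities. Each off-diagonal block then carries a factor $\Sigma\Sigma^{T}=0$, $\Sigma^{\#}\Sigma^{\dag}=0$, $\Sigma\Theta\Sigma^{T}=0$ or $\Sigma^{\#}\Theta\Sigma^{\dag}=0$ and vanishes; the $(1,1)$ block reduces to $(\Sigma\Theta\Sigma^{\dag})\,P\,(\Sigma\Sigma^{\dag})=\half\Sigma\Theta\Sigma^{\dag}P=M$, and the $(2,2)$ block reduces to $(\Sigma^{\#}\Theta\Sigma^{T})\,P^{\#}\,(\Sigma^{\#}\Sigma^{T})=\frac{\isym}{4}P^{\#}=M^{\#}$ (using $M=-\frac{\isym}{4}P$ and $\overline{-\isym}=\isym$). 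This is precisely $\diag(M,M^{\#})$.

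The only real obstacle is notational bookkeeping — keeping the interplay of the $(\cdot)^{\#}$, $(\cdot)^{\dag}$, $(\cdot)^{T}$ operations and the scalar factors of $\isym$ and $\half$ straight, and deriving the $\Theta$-sandwich identities — rather than anything conceptually deep; organizing everything around $P=\tilde R-\isym\tilde K^{\dag}\tilde K$ keeps the computation transparent. Conceptually the lemma just records that for a passive system the drift decouples in the annihilation-operator coordinates, the $a$'s evolving among themselves with generator $M$.
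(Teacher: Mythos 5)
Your proof is correct and follows essentially the same route as the paper: substitute $R=\Re\{\Sigma^{\dag}\tilde R\Sigma\}$, $K=\tilde K\Sigma$ into $A=2\Theta(R+\Im\{K^{\dag}K\})$ and collapse the conjugation by the $\Sigma$-blocks using the identities (\ref{eq:Sigma-id}) together with $\Sigma\Theta\Sigma^{T}=0=\Sigma^{\#}\Theta\Sigma^{\dag}$. Your packaging via $P=\tilde R-\isym\tilde K^{\dag}\tilde K$ and the explicit evaluation $\Sigma\Theta\Sigma^{\dag}=-\tfrac{\isym}{2}I_{n}$ (so $M=-\tfrac{\isym}{4}P$) is only a cosmetic streamlining of the paper's computation, and all the scalar factors check out.
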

\begin{proof}
For the proof, we exploit the identities (\ref{eq:Sigma-id}) as
well as the following easily verified identities:
$
\Sigma \Theta \Sigma^{T}=0= \Sigma^{\#}\Theta \Sigma^{\dag}$.
Using these identities, we have the following:
\begin{eqnarray*}
[\begin{array}{cc} \Sigma^T & \Sigma^{\dag}
\end{array}]^T A\,[\begin{array}{cc} \Sigma^{\dag} &
\Sigma^T
\end{array}] &=& 2[\begin{array}{cc} \Sigma^T & \Sigma^{\dag}
\end{array}]^T \Theta(R + \Im\{K^{\dag}K\})\,[\begin{array}{cc}
\Sigma^{\dag} &
\Sigma^T \end{array}]\\
&=& 2[\begin{array}{cc} \Sigma^T & \Sigma^{\dag}
\end{array}]^T \Theta(\Re\{\Sigma^{\dag}\tilde R\Sigma\} +
\Im\{\Sigma^{\dag}\tilde K^{\dag} \tilde K\Sigma\}) \,[\begin{array}{cc}
\Sigma^{\dag} & \Sigma^T \end{array}]\\
&=& 2[\begin{array}{cc} \Sigma^T & \Sigma^{\dag}
\end{array}]^T \Theta\biggl(\half(\Sigma^{\dag}\tilde R\Sigma+ \Sigma^{T}\tilde
R^{\#}\Sigma^{\#}) \\
&&\quad -\frac{\isym}{2}(\Sigma^{\dag}\tilde K^{\dag}\tilde K\Sigma -
\Sigma^{T}\tilde K^{T}\tilde
K^{\#}\Sigma^{\#})\biggr)\,[\begin{array}{cc}
\Sigma^{\dag} & \Sigma^T \end{array}]\\
&=& 2[\begin{array}{cc} \Sigma^T & \Sigma^{\dag}
\end{array}]^T \Theta \left[\begin{array}{cc} \quarter \Sigma^{\dag}\tilde
R-\frac{\isym}{4} \Sigma^{\dag} \tilde K^{\dag}\tilde K & \quarter \Sigma^T \tilde R^{\#}
+\frac{\isym}{4}\Sigma^T \tilde K^{T} \tilde K^{\#} \end{array}\right]\\
&=&\diag\biggl(\half \Sigma \Theta\Sigma^{\dag}\tilde
R-\frac{\isym}{2}\Sigma \Theta\Sigma^{\dag}\tilde K^{\dag}\tilde K, \\
&&\quad \half
\Sigma^{\#}\Theta\Sigma^T \tilde R^{\#} +\frac{\isym}{2}\Sigma^{\#}\Theta
\Sigma^T \tilde K^{T} \tilde  K^{\#}\biggr).
\end{eqnarray*}
\end{proof}

Then we have the following theorem:
\begin{theorem}
\label{th:passive-casc}Let $U$ be the complex unitary matrix in
a {\em Schur decomposition} of the matrix $M$ of Lemma \ref{lm:A-decomp}: $M=U^{\dag}\hat M U$, where
$\hat M$ is a lower triangular matrix. Then the matrix
$$
V= 2\,[\begin{array}{cc} \Sigma^{\dag} & \Sigma^T \end{array}]\,\diag(U,U^{\#})[\begin{array}{cc} \Sigma^T & \Sigma^{\dag}\end{array}
]^T
$$
is a {\em real}, {\em unitary}, and {\em symplectic} matrix that
transforms $A$ into a lower $2 \times 2$ block triangular matrix:
$VAV^{\dag}=\hat A$, where $\hat A$ is  a real lower $2 \times 2$
block triangular matrix. Therefore, every passive linear quantum
system has a transfer function realization by pure cascading and
such a realization is obtained by applying the construction of
Theorem \ref{th:casc-struc} to $G'=(S, KV^Tx , \half x^T VRV^T x)$. Moreover, each
of the one degree of freedom oscillator in the cascade will also
be passive.
\end{theorem}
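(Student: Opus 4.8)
The plan is to verify, in turn, that $V$ is real, unitary, and symplectic, that $VAV^{\dag}$ is lower $2\times 2$ block triangular, and then to read off the cascade and its passivity from Theorem \ref{th:casc-struc} and the passive parametrization. Throughout I abbreviate $\Xi=\left[\begin{array}{c}\Sigma\\ \Sigma^{\#}\end{array}\right]$, so that $\Xi^{-1}=2[\begin{array}{cc}\Sigma^{\dag} & \Sigma^{T}\end{array}]$ by hypothesis, and I write $\mathcal{U}=\diag(U,U^{\#})$, so that $V=\Xi^{-1}\mathcal{U}\,\Xi$. Everything is driven by a short list of auxiliary identities, all immediate from (\ref{eq:Sigma-id}) and the explicit form of $\Sigma$: $\Xi^{\dag}=\tfrac12\Xi^{-1}$; $\Sigma\,\Xi^{-1}=[\begin{array}{cc}I_n & 0\end{array}]$ and $\Sigma^{\#}\,\Xi^{-1}=[\begin{array}{cc}0 & I_n\end{array}]$; $\Xi\,\Sigma^{\dag}=\left[\begin{array}{c}\tfrac12 I_n\\ 0\end{array}\right]$ and $\Xi\,\Sigma^{T}=\left[\begin{array}{c}0\\ \tfrac12 I_n\end{array}\right]$; and the commutator computations $\Sigma\Theta\Sigma^{\dag}=-\tfrac{\isym}{2}I_n$ and $\Sigma\Theta\Sigma^{T}=0$ (the latter already invoked in the proof of Lemma \ref{lm:A-decomp}).

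First, since $\Sigma^{\#\#}=\Sigma$, conjugating $V=2(\Sigma^{\dag}U\Sigma+\Sigma^{T}U^{\#}\Sigma^{\#})$ merely interchanges the two summands, so $V^{\#}=V$: $V$ is real. Unitarity I would get from $V^{\dag}=\Xi^{\dag}\mathcal{U}^{\dag}(\Xi^{-1})^{\dag}=\Xi^{-1}\mathcal{U}^{-1}\Xi=V^{-1}$, using $\Xi^{\dag}=\tfrac12\Xi^{-1}$, $(\Xi^{-1})^{\dag}=2\Xi$, and unitarity of $\mathcal{U}$; combined with $V$ real this also shows $V^{T}=V^{-1}$. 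For the symplectic property, the identities above give $\Xi\Theta\Xi^{T}=-\tfrac{\isym}{2}\left[\begin{array}{cc}0 & I_n\\ -I_n & 0\end{array}\right]$, and since $\mathcal{U}\left[\begin{array}{cc}0 & I_n\\ -I_n & 0\end{array}\right]\mathcal{U}^{T}=\left[\begin{array}{cc}0 & UU^{\dag}\\ -U^{\#}U^{T} & 0\end{array}\right]=\left[\begin{array}{cc}0 & I_n\\ -I_n & 0\end{array}\right]$ (the block $U^{\#}U^{T}=\overline{UU^{\dag}}=I_n$), one obtains $V\Theta V^{T}=\Xi^{-1}\mathcal{U}\,(\Xi\Theta\Xi^{T})\,\mathcal{U}^{T}(\Xi^{-1})^{T}=\Theta$.

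For the triangularization, Lemma \ref{lm:A-decomp} rewrites as $\Xi A\Xi^{-1}=\diag(2M,2M^{\#})$; conjugating by $\mathcal{U}$ and substituting the Schur factorization $M=U^{\dag}\hat M U$ turns this into $\mathcal{U}(\Xi A\Xi^{-1})\mathcal{U}^{-1}=\diag(2\hat M,2\hat M^{\#})$, so that $VAV^{-1}=\Xi^{-1}\diag(2\hat M,2\hat M^{\#})\Xi=8\,\Re\{\Sigma^{\dag}\hat M\Sigma\}$. A direct $2\times2$-block computation shows that, for any $N\in\Cbb^{n\times n}$, the $(j,k)$ block of $\Re\{\Sigma^{\dag}N\Sigma\}$ equals $\tfrac14\left[\begin{array}{cc}\Re N_{jk} & -\Im N_{jk}\\ \Im N_{jk} & \Re N_{jk}\end{array}\right]$, which vanishes when $N_{jk}=0$; with $N=\hat M$ lower triangular this makes $\hat A:=VAV^{-1}$ real and lower $2\times 2$ block triangular, and since $V$ is real, $VAV^{\dag}=VAV^{-1}=\hat A$. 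Moreover $\hat A$ coincides with the $A$-matrix $2\Theta(R'+\Im\{(K')^{\dag}K'\})$ of $G'=(S,KV^{T}x,\tfrac12 x^{T}VRV^{T}x)$ (computed as in the proof of Corollary \ref{cl:casc-struc}, using $V^{-1}=V^{T}$ and that $V$ commutes with $\Theta$), and $V$ being symplectic makes $G'$ a transfer function realization of $G$ in the sense of Definition \ref{df:io-equiv}. By Theorem \ref{th:casc-struc}, $G'$ then equals the pure cascade $G'_{n}\triangleleft\cdots\triangleleft G'_{1}$ with $G'_{1}=(S,K'_{1}x_{1},\tfrac12 x_{1}^{T}R'_{11}x_{1})$ and $G'_{k}=(I,K'_{k}x_{k},\tfrac12 x_{k}^{T}R'_{kk}x_{k})$ for $k\ge 2$, where $K'=KV^{T}=[\begin{array}{ccc}K'_{1} & \cdots & K'_{n}\end{array}]$ and $R'=VRV^{T}=[R'_{jk}]$ --- establishing the transfer function realization of $G$ by pure cascading.

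Finally, for passivity of the components I would track the passive parametrization through $V$. From $\Sigma\,\Xi^{-1}=[\begin{array}{cc}I_n & 0\end{array}]$ one reads off $\Sigma V=U\Sigma$, $\Sigma V^{T}=U^{\dag}\Sigma$, $V\Sigma^{\dag}=\Sigma^{\dag}U$, $V\Sigma^{T}=\Sigma^{T}U^{\#}$. Writing the passive $G$ as $R=\Re\{\Sigma^{\dag}\tilde R\Sigma\}$, $K=\tilde K\Sigma$ with $\tilde R$ Hermitian, these give $K'=KV^{T}=(\tilde K U^{\dag})\Sigma$ and $R'=VRV^{T}=\Re\{\Sigma^{\dag}(U\tilde R U^{\dag})\Sigma\}$, with $U\tilde R U^{\dag}$ Hermitian, so $G'$ is again passive; hence by the block formula above $R'_{kk}=\tfrac14(U\tilde R U^{\dag})_{kk}I_{2}$ is a real multiple of $I_{2}$ (the off-diagonal entries vanish since $(U\tilde R U^{\dag})_{kk}\in\Rbb$) and $K'_{k}$ is the $k$-th column of $\tilde K U^{\dag}$ times $[\begin{array}{cc}\half & \half\isym\end{array}]$, so that $L'_{k}=K'_{k}x_{k}$ is a multiple of $a_{k}$ and $H'_{k}=\tfrac12 x_{k}^{T}R'_{kk}x_{k}=\tfrac12 a_{k}^{*}(U\tilde R U^{\dag})_{kk}a_{k}+\mathrm{const}$; thus each $G'_{k}$ is a passive one degree of freedom oscillator. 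The one genuinely delicate part of this plan is the bookkeeping needed for the auxiliary identities --- above all $\Xi\Theta\Xi^{T}=-\tfrac{\isym}{2}\left[\begin{array}{cc}0 & I_n\\ -I_n & 0\end{array}\right]$ and the $2\times2$-block formula for $\Re\{\Sigma^{\dag}N\Sigma\}$ --- which conceals several interacting transpositions and conjugations; once those are pinned down, each of the claimed properties of $V$ and the cascade structure follow by purely formal manipulation.
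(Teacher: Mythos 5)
Your proposal is correct and follows essentially the same route as the paper: it invokes Lemma \ref{lm:A-decomp} plus the Schur form to get $VAV^{\dag}=8\Re\{\Sigma^{\dag}\hat M\Sigma\}$, verifies that $V$ is real and unitary from the $\Sigma$-identities, and reads off the cascade and the passivity of each one degree of freedom oscillator from Theorem \ref{th:casc-struc} and the transformed parameters $\tilde K U^{\dag}$, $U\tilde R U^{\dag}$, just as the paper does. The only (harmless) divergence is that you establish symplecticity by the direct computation $V\Theta V^{T}=\Xi^{-1}\mathcal{U}(\Xi\Theta\Xi^{T})\mathcal{U}^{T}(\Xi^{-1})^{T}=\Theta$, whereas the paper argues via preservation of the CCR for $z=Vx$ with $b=Ua$; you also make explicit the $2\times2$-block formula for $\Re\{\Sigma^{\dag}N\Sigma\}$ that the paper leaves to inspection.
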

\begin{proof}
The existence of $U$ is guaranteed by the well known result that
every complex matrix $M$ has a Schur decomposition of the form
$M=U^{\dag}\hat M U$ with $\hat M$ lower triangular. Note then that we
also have $\hat M^{\#}=U^{\#} M^{\#}U^T$. Let $V$ be as defined
in the theorem. Then by Lemma \ref{lm:A-decomp} the following is
true:
\begin{eqnarray*}
VAV^{\dag} = 4\, [\begin{array}{cc} \Sigma^{\dag} & \Sigma^T
\end{array} ]\,\diag(\hat M,\hat M^{\#})[\begin{array}{cc} \Sigma^T & \Sigma^{\dag}\end{array}
]^T = 4(\Sigma^{\dag}\hat M\Sigma + \Sigma^T \hat M^{\#}\Sigma^{\#})= 8\Re\{  \Sigma^{\dag}\hat M\Sigma \}.
\end{eqnarray*}
Now, since $\hat M$ is a lower triangular matrix, it follows by
inspection (using the special structure of $\Sigma$) that $\hat A=8\Re\{ \Sigma^{\dag}\hat M\Sigma\}$ is a lower $2
\times 2$ block diagonal matrix, as claimed. That $V$ is real follows from the fact that we may write
$V=2(\Sigma^{\dag}U\Sigma
+\Sigma^TU^{\#}\Sigma^{\#})=4\Re\{\Sigma^{\dag}U\Sigma \}$. That it
is {\em unitary} follows from the observation that
$\sqrt{2}[\begin{array}{cc} \Sigma^{\dag} & \Sigma^T
\end{array}]$ and $\sqrt{2}\left[\begin{array}{c}
\Sigma \\ \Sigma^{\#}
\end{array}\right]$ are unitary (as a consequence of (\ref{eq:Sigma-id})) and that $\diag(U,U^{\#})$ is also unitary. To see that $V$ is also {\em symplectic}
define $b=Ua$ and $z=Vx$. By the unitarity of $U$ we have that $b$
and $b^{\#}$ satisfy the same the commutation relations as $a$ and
$a^{\#}$ (i.e., $b$ is again an annihilation operator). Then we
have
\begin{eqnarray*}
z=Vx =V2[\begin{array}{cc} \Sigma^{\dag} & \Sigma^T
\end{array} ]\left[ \begin{array}{c} a \\ a^{\#}
\end{array}\right]= 2[\begin{array}{cc} \Sigma^{\dag} & \Sigma^T \end{array}
] \diag(U,U^{\#}) \left[ \begin{array}{c} a \\ a^{\#}
\end{array}\right]=[\begin{array}{cc} \Sigma^{\dag} & \Sigma^T \end{array}] \left[ \begin{array}{c} b \\ b^{\#}
\end{array}\right].
\end{eqnarray*}
Now, this implies that $z$ consists of the canonical position and
momentum operators associated with the modes in $b$ and satisfies
the same CCR as $x$. But since $z=Vx$
and $V$ is real, preservation of the CCR implies that $V$ is
necessarily a symplectic matrix (this is standard knowledge in quantum mechanics; see, e.g., \cite[section III]{lin98}).

Using the fact that $V^{-1}=V^{T}=V^{\dag}$ established above, it
follows from Theorem \ref{th:casc-struc}  that the passive quantum
system $G$ is transfer function equivalent to 
$G'=(S,KV^Tx,\half x^T V RV^{T} x)$
whose $A$ matrix is lower $2
\times 2$ block triangular. Let $K'=KV^T=[\begin{array}{cccc}
K'_1 & K'_2 & \ldots & K'_n
\end{array}]$ and $R'=VRV^T=[R'_{jk}]_{j,k=1,\ldots,n}$. By
Theorem \ref{th:casc-struc} we have $G'=G_n \triangleleft G_{n-1}
\triangleleft \ldots \triangleleft G_1$ with
$G_k=(S_k,K'_kx_k,\half x_k^T R_{kk}'x_k)$, $S_1=S$ and $S_k=I$
for $k>1$. We now show that each $G_k$ is passive. Recall that $K=\tilde K\Sigma$ and write
$\tilde K=[\begin{array}{ccc} \tilde K_1 & \ldots & \tilde K_n
\end{array}]$ with $\tilde K_k \in \Cbb^{m \times 1}$. Using
(\ref{eq:Sigma-id}) we have that $K'x=KV^{\dag}x=\tilde KU^{\dag} a$. By expanding
both sides of the equality $K'x=\tilde KU^{\dag}a$ and collecting and
equating terms of the same index, it follows that $K'_kx_k=
(\tilde KU^{\dag})_k a_k$, where $(\tilde KU^{\dag})_k$ is the $k$-th $\Cbb^{m \times 1}$ block component of $\tilde KU^{\dag}$. On the other hand, since $G$ is passive we have that $R_{kk}=\lambda_k I_2$
for some $\lambda_k \in \Rbb$, and recalling that $R=\Re\{\Sigma^{\dag} \tilde R \Sigma\}$, it follows by inspection (after some algebraic manipulations using (\ref{eq:Sigma-id})) that $R'_{kk}=\lambda'_{k}I_2$  for some $\lambda'_k  \in \Rbb$. Thus, $x_k^T R'_{kk} x_k= \lambda_k' a_k^{*}a_k + c$ for some constant
$c$ and we conclude that each $G_k$ is also passive. 
\end{proof}

\begin{example}
Let $G=(I,\tilde K a,\half
a^{\dag}\tilde R a+\frac{5}{4})$ be a passive system with $\tilde R=\left[\small \begin{array}{cc} 2 & 1+\isym \\
1-\isym & 3\end{array}\normalsize \right] $ and $\tilde K=\left[\small \begin{array}{cc}
1+0.5\isym & -2+\isym\\ -5-2\isym & 3-4\isym \end{array} \normalsize \right]$.  Here $K=\left[\small \begin{array}{cccc} 0.5+0.25\isym 
& -0.25+0.5\isym & -1 + 0.5\isym & -0.5 -\isym\\ -2.5-\isym & 1-2.5\isym & 1.5-2\isym & 2+1.5\isym \end{array} \normalsize \right]$ and \\
$R=\left[\small \begin{array}{cc} 0.5 I_2 & 0.25 (I_2-J) \\ 0.25 (I_2+J) & 0.75I_2\end{array} \normalsize \right] $.  
\hspace{1cm}  By Lemma \ref{lm:A-decomp} and Theorem \ref{th:passive-casc}, we have that 
$U=\left[\small \begin{array}{cc} -0.6933 + 0.0039\isym &  0.2244 - 0.6849\isym
\\ 0.7204 + 0.0209\isym & 0.2312 - 0.6536\isym
\end{array} \normalsize \right]$ and $\hat M=\left[\small \begin{array}{cc}  -14.8390 - 0.7912\isym  & 0 \\ 0.6344 - 0.2225\isym & -0.2235 - 0.4588\isym
\end{array}\normalsize \right]$. Then by the formula of Theorem
\ref{th:passive-casc}  we have
$$V=\left[\small \begin{array}{cccc}
-0.6933 &  0.0039 & 0.2244 &  0.6849\\
   -0.0039 & -0.6933 & -0.6849 & 0.2244\\
   0.7204 & -0.0209 & 0.2312 & 0.6536\\
    0.0209 & 0.7204 & -0.6536 & 0.2312
\end{array}\normalsize \right].$$
Let $K'=KV^{-1}=[\begin{array}{cc} K'_1 & K'_2 \end{array}]$, then
\begin{align*}
K'&=\left[\small \begin{array}{cccc} -0.9144 - 0.7441\isym & 0.7441  - 0.9144\isym
& -0.1926 - 0.3684 \isym & 0.3684 - 0.1926\isym \\
3.4433 + 1.2621\isym  & -1.2621+ 3.4433\isym & -0.1679 - 0.1501\isym  & 0.1501 - 0.1679\isym \end{array} \normalsize \right],
\end{align*}
and
\begin{align*}
R' &=VRV^{-1}= \left[\begin{array}{cc} R'_{11} & R'_{12}\\ (R'_{12})^T &
R'_{22}
\end{array}\right] =\left[\small \begin{array}{cccc}
      0.7912  &   0  &  0.1113 &  0.3172\\
         0  &   0.7912 &  -0.3172  &   0.1113\\
    0.1113 &  -0.3172  &  0.4588 &   0\\
    0.3172 &   0.1113 &  0 &   0.4588
\end{array} \normalsize \right].
\end{align*}
Therefore, by the theorem, $G$ is transfer function realizable by
$G'=(I,K'x,\half x^T R' x)$. It is easily checked that
$R'+\Im\{K'K\}$ is lower $2 \times 2$ block triangular:
\begin{align*}
R'+\Im\{K'^{\dag}K'\}=\left[\small \begin{array}{cccc} 0.7912 & 14.8390 &
0 & 0\\
   -14.8390  &   0.7912 &   0   &  0\\
   0.2225 &  -0.6344 &   0.4588 &  0.2235\\
    0.6344 &  -0.2225 & -0.2235 &   0.4588
\end{array} \normalsize \right],
\end{align*}
therefore $G'$ can be realized by a pure cascade connection of one
degree of freedom harmonic oscillators. According to Theorem
\ref{th:casc-struc}, $G'=G_2 \triangleleft G_1$ with
$G_1=(I,K'_1x_1,\half x_1^T R'_{11}x_1)$ and $G_2=(I,K'_2x_2,\half
x_2^T R'_{22}x_2)$. It is easily inspected that both $G_1$
and $G_2$ are passive. A quantum optical realization of $G'$ is illustrated in Fig.~\ref{fig:passive-realization}.

\begin{figure}[h!]
\centering
\includegraphics[scale=0.4]{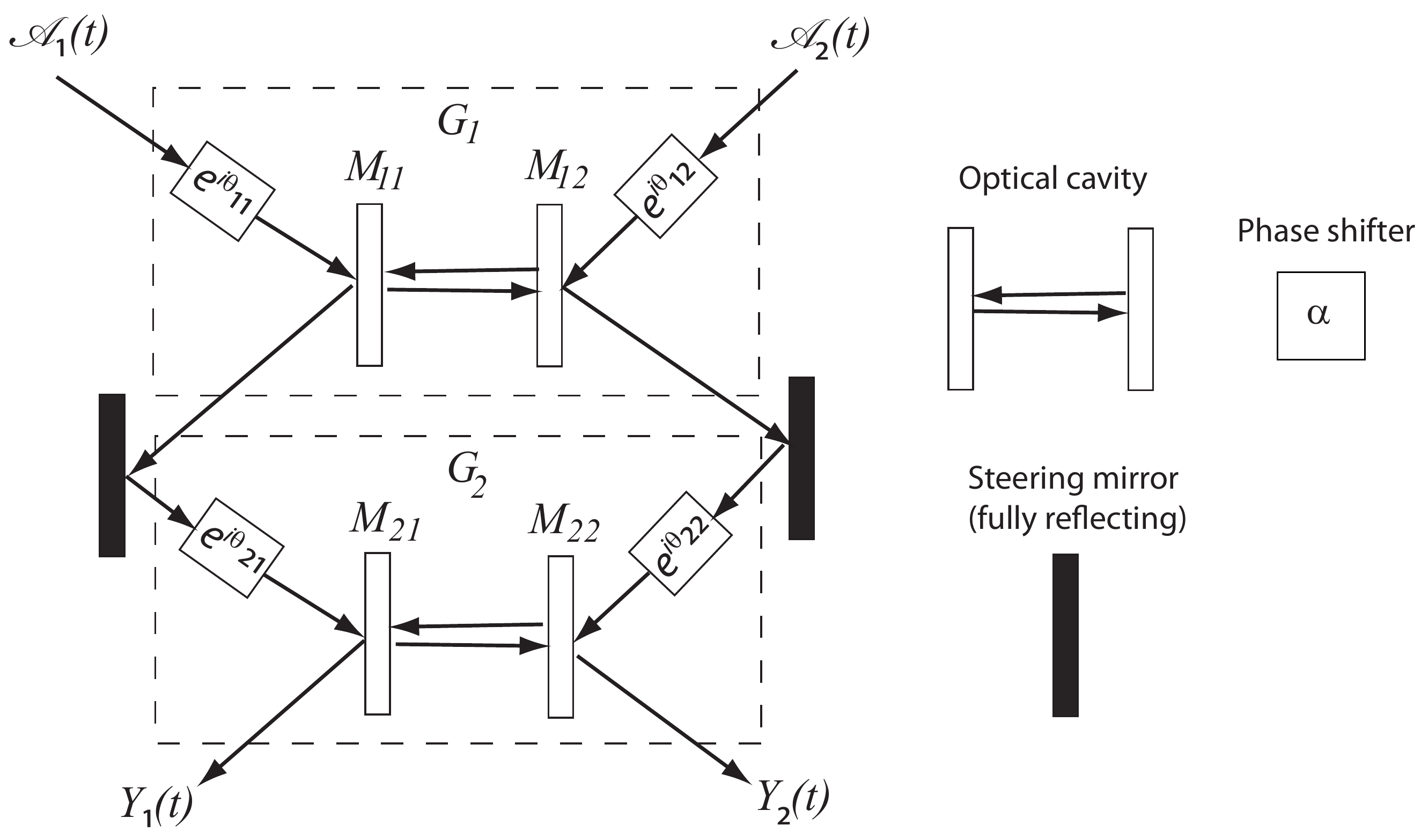}
\caption{Realization of $G'$ as the cascade connection of $G_1$ and $G_2$. $G_1$ and $G_2$ are each realized
by an optical cavity and a phase shifter; see, e.g., \cite{BR04,WaM94,NJD08} for a discussion of these devices. Dark rectangles depict  fully reflecting mirrors, while light rectangles depict partially transmitting mirrors; an optical cavity is formed by bouncing light back and forth between two mirrors. Here $\theta_{11}=-2.4585$, $\theta_{12}=0.3513$, $\theta_{21}=-2.0525$, and $\theta_{22}=-2.4121$, and the partially transmitting mirror $M_{jk}$ in the optical cavity  $G_j$, $j,k=1,2$, have the coupling coefficients  $\gamma_{11}=1.3898$, $\gamma_{12}=13.4492$,  $\gamma_{21}=0.1728$ and $\gamma_{22}=0.0507$, respectively. The resonance frequencies of the optical cavities of $G_1$ and $G_2$ have a detuning of $0.7912-0.4588=0.3324$, with $G_1$ having the higher resonance frequency.} \label{fig:passive-realization} 
\end{figure}
\end{example}

\section{Conclusions}
\label{sec:conclusions}
We have derived a characterization of linear quantum
stochastic systems that can be realized, in a strict or transfer
function sense, by a cascade connection of one degree of freedom
quantum oscillators alone, without requiring any direct bilinear interaction Hamiltonian between these oscillators. The
results are constructive in that if a system can be
realized by a cascade connection, it is explicitly shown how to
construct it. Then it was shown that the sub-class of passive
linear quantum stochastic systems is  always transfer function realizable by a pure
cascade connection.

\bibliographystyle{ieeetran}
\bibliography{ieeeabrv,rip,mjbib2004}

\end{document}